\title{A Pseudopolynomial Algorithm to Minimize Maximum Lateness on Multiple Related Machines}
\author{Elbert Du\thanks{Harvard University} \and Stan Zhang\thanks{Massachussetts Institute of Technology}}
\date{\today}
\begin{document}

\maketitle

\begin{abstract}

In this paper, we will find a pseudopolynomial algorithm to solve $Qm \mid \mid L_{\max}$ and then we will prove that it is impossible to get any constant-factor approximation in polynomial time, and thus also impossible to have a PTAS for this problem. We will also show that the the problem when we don't assume a fixed number of machines, $P \mid \mid L_{\max}$, is strongly NP-hard.
\end{abstract}

\section{Introduction}

With job scheduling, since there are many problems, we have some shorthand notation for denoting problems. When we write $A \mid B \mid C$, $A$ refers to the model we are using and $B$ refers to any constraints we have, and $C$ refers to the objective we wish to minimize/maximize. We will now introduce the specific models, constraints, and objectives that will be used in this paper.\\

The objective function we will be optimizing is $L_{\max}$, which is to minimize the maximum lateness. The other objective function that will be mentioned is $C_{\max}$, which is to minimize the maximum completion time. The models we will be using are $P$ and $Q$. $P$ stands for parallel machines, all of which run at the same speed and in parallel. $Q$ refers to related machines, which means the machines still run in parallel but they can run at different speeds. However, the ratio between the time it takes to run two jobs $a$ and $b$ is constant among all machines. The number after it refers to the number of machines used, and we assume it to be a constant when we write $m$ machines. We will not be adding any constraints.\\

It is known that $P2 \mid \mid C_{\max}$ is NP-hard.\cite{Complexity} This is a generalized version of $P2 \mid \mid L_{\max}$, so this problem is NP-hard as well. Furthermore, if we add some constraints to $P2 \mid \mid L_{\max}$, it becomes strongly NP-hard.\cite{Complexity2} \cite{Scheduling}

\section{Results}
In this paper we will provide a pseudopolynomial algorithm solving the job scheduling problem of $Qm \mid \mid L_{\max}$. This is a generalization of $P2 \mid \mid L_{\max}$, and thus fills in the gap between the above results by showing that $P2 \mid \mid L_{\max}$ is in $P_{pseudo}$\\

Firstly, we will reduce solving $\min L_{\max}$ to checking feasibility on a set of jobs.

\begin{theorem}
Given an algorithm that solves checking feasibility of solving a set of jobs on time in time $T(n)$, and that $T$ is the maximum time it takes to finish any job, we can solve $\min L_{\max}$ in time $O(\log nT) T(n)$
\end{theorem}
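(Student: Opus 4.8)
The plan is to use the given feasibility algorithm as a black-box subroutine inside a binary search over the target value of $L_{\max}$. The key observation is that a bound on maximum lateness is really an ``on time'' condition in disguise, once we translate the due dates. Recall that if job $j$ has due date $d_j$ and completes at time $C_j$, then its lateness is $L_j = C_j - d_j$, so $L_{\max} = \max_j (C_j - d_j)$. Hence a schedule satisfies $L_{\max} \le k$ if and only if $C_j \le d_j + k$ for every $j$, which is exactly the assertion that every job finishes on time after we replace each due date $d_j$ by $d_j + k$. So to test whether a candidate value $k$ is achievable, I would construct the instance with shifted due dates $d_j + k$ and feed it to the feasibility oracle, at a cost of $T(n)$ per test. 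Note that this reformulation is insensitive to the machine model, since it only compares completion times against due dates, so the oracle for $Qm \mid \mid \cdot$ applies verbatim.

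Next I would record the monotonicity that makes the binary search correct: if some schedule achieves $L_{\max} \le k$, the very same schedule witnesses $L_{\max} \le k'$ for every $k' \ge k$, so the set of feasible thresholds is an upward-closed interval whose left endpoint is the optimum $L_{\max}^*$. Because all processing times and due dates are integers, $L_{\max}^*$ is an integer, and binary search will land on it exactly. It then remains to pin down a search range whose width is polynomial in $nT$, which is precisely where the $O(\log nT)$ factor originates. Shifting every due date by $-\min_j d_j$ changes every $L_j$, and hence $L_{\max}$, by the same constant without altering which schedule is optimal, so I may normalize $\min_j d_j = 0$ and add the constant back at the end. Under this normalization the job with due date $0$ must finish at a positive time, forcing $L_{\max}^* \ge 0$, while scheduling all jobs on the single fastest machine in any order completes everything by time at most $nT$ and thus certifies $L_{\max}^* \le nT$. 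Binary searching for the integer $L_{\max}^*$ over $[0, nT]$ therefore uses $O(\log nT)$ oracle calls.

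Combining the pieces gives the bound: $O(\log nT)$ threshold tests, each costing $T(n)$, for a total of $O(\log nT \cdot T(n))$. I expect the main thing to get right is the range bound in the second step, since the advertised iteration count hinges on confining $L_{\max}^*$ to an interval of width polynomial in $nT$; the reformulation and monotonicity are conceptually routine, but they must be stated carefully to justify that a single feasibility oracle (run on shifted due dates) suffices and that the search direction is valid.
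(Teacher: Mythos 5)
Your proposal is correct and takes essentially the same approach as the paper: shift every deadline by the candidate value $k$, test feasibility with the oracle, and binary search over the integer range $[0, nT]$, for $O(\log nT)$ calls of cost $T(n)$ each. Your normalization step (shifting due dates so that $\min_j d_j = 0$ to guarantee $L_{\max}^* \ge 0$) is a nice refinement that justifies the lower end of the search range, a point the paper's proof simply asserts.
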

\begin{proof}
Note that if we have an algorithm to check whether the set of jobs is feasible which runs in time $T(n)$, we can check if getting a maximum lateness of $x$ is possible by just adding $x$ to all of the deadlines and check whether the resulting set of jobs is feasible.\\

By doing this, we can now binary search for the maximum lateness. If we assume the times are all integers, we can check binary search between a lower bound of $0$ and an upper bound of $nT$ where $n$ is the number of jobs and $T$ is the maximum time it takes to run any one job. This gives us a runtime of $O(\log nT) \cdot T(n)$ to get $\min L_{\max}$.\\
\end{proof}

\subsection{Solving $P2 \mid L_{\max}$}
Firstly, we note that on any given machine, any feasible ordering of the jobs done on that machine can also be done feasibly by doing all the jobs in order of increasing deadlines (as if $2$ consecutive jobs are out of order, we can swap them).\cite{jackson} Also note that we can just scale the deadlines and the time it takes to do each job by the same amount and not change the problem, so we can assume the numbers to be integers.

Now, given two machines, we can compute feasibility as follows:\\

We DP on time and number of jobs seen: for a time $t$ and being given the first $i$ jobs, we want to have $D_{t,i} = 1$ iff there is a feasible scheduling that makes one of the machines finish at exactly time $t$ while processing all of the first $i$ jobs on time and $D_{t,i} = 0$ otherwise.\\

To determine the value of $D_{t,i}$, there must either be a scheduling where we put the $i^{th}$ job at the end of the schedule of the machine that ends up completing at time $t$ or in the other machine. To do this, we check $D_{t-t_i, i-1}$ and $D_{t,i-1}$. If $D_{t-t_i, i-1} = 1$ and $t \le d_i$, then we can add the $i^{th}$ job to the end of that machine and get a feasible scheduling of the first $i$ machines. If not, adding $i$ to the end of that machine does not give us a feasible scheduling. Otherwise, if $D_{t,i-1} = 1$, then we know that the completion time of the other machine on the first $i-1$ jobs is
$$\left(\sum_{j=1}^{i-1} t_j\right) - t$$
and that the other machine completes all of these jobs feasibly. If this value is at most $d_i - t_i$, then we can add the $i^{th}$ job to the end of it to get a valid scheduling.\\

Thus, we let $D_{t,i} = 1$ if either of the above conditions are true and $0$ otherwise. This is correct since there are only two machines, and one of them has to complete at time $t$ for this to be $1$. The above checks both possibilities for whether they are possible, and so if neither works, then we know this is impossible.\\

Now, to check feasibility, we just need to look at the $n^{th}$ row and see if any of the values are $1$. If any are, we return $1$. Otherwise, we return $0$.\\

The runtime of this algorithm is as follows:\\

We know that the columns of the DP table only need to go up to $\sum_{i=1}^n t_n \le nT$ where $T = \max t_n$ so we can make our table $n \times nT$.\\

Now, to fill out the $i^{th}$ row, we first compute $\left(\sum_{j=1}^{i-1} t_j\right)$ in $O(n)$ time. There are $n$ rows so all of these together takes $O(n^2)$ time. Then, to fill in the value at each cell in that row, we check two values in the row above and then sometimes compute whether $\left(\sum_{j=1}^{i-1} t_j\right) - t \le d_i - t_i$. Since we know the value of $\left(\sum_{j=1}^{i-1} t_j\right)$, each of these takes $O(1)$ time so we spend $O(1)$ time on each cell. There are $n^2T$ cells so this takes $O(n^2T)$ time.\\

Thus, the total runtime to check feasibility is $O(n^2T)$ and the runtime to solve $P2 \mid L_{\max}$ is $O(n^2 T \log (nT))$\\

\subsection{Generalization to $Pm \mid L_{\max}$}

To generalize to $m$ machines, we number the machines $1,2, \dots m$ and let $DP[i][(x_1, x_2, \cdots, x_{m-1})]$ indicate whether there is a feasible scheduling of the first $i$ jobs jobs such that machine $j$ completes all jobs scheduled to it in exactly $x_j$ time for $1 \leq j \leq m-1$ (and this uniquely determines the completion time for machine $m$ since it's just the remaining time that isn't accounted for in the first $m-1$ machines).\\

We again assume that $t_i$ is the amount of time it takes to complete job $i$ and $d_i$ is the deadline of job $i$. To compute $DP[i][(x_1, x_2, \cdots, x_{m-1})]$, if $DP[i-1][(x_1, x_2, \cdots, x_j-t_i, \cdots, x_{m-1}] = 1$ for any $j$ such that $x_j \le d_i$, then we have a feasible scheduling of the first $i-1$ jobs such that if we add the $i^{th}$ job to the end of the schedule for the $j^{th}$ machine, we get the desired completion times for all of the machines and it is still feasible.\\

In the other case, we add the $i^{th}$ job to the last machine. Then, we need to check that $DP[i-1][(x_1, x_2, \cdots, x_{m-1})] = 1$, so there is a feasible scheduling of the first $i-1$ jobs with the same weights on the first $m-1$ machines. Then, for this scheduling we also need to make sure that adding the $i^{th}$ job to the last machine does not go over the deadline for that job. The last machine runs in time $\sum_{k=1}^i t_k - \sum_{k=1}^{m-1} x_k$, and we can make this transition as long as $\sum_{k=1}^i t_k - \sum_{k=1}^{m} x_k \le d_i$.\\

Again, we can compute and store $\sum_{k=1}^i t_k$ each time we get to a new row in $O(n^2)$ time. Then, each dimension is at most $nT$ where $T$ is the maximum time it takes to complete a job, so there are $n^mT^{m-1}$ total states, and computing the value of a state takes $O(m)$ time (as we need to check a transition for every single machine and potentially add up $m$ numbers), so the overall runtime is $O(mn^mT^{m-1})$ time, which is pseudopolynomial if $m$ is a constant. \\

\subsection{Generalization to $Qm \mid L_{\max}$}

This time, we will suppose that we are given for machine $i$ a rate parameter $\lambda_i$ which is the number of seconds it takes the machine to do $1$ unit of work. Just like above, we will suppose that job $i$ takes $t_i$ work and has a deadline of $d_i$. We will first scale the rate parameters and the amount of work each job takes until the work each job takes is an integer for every job since scaling both by the same amount doesn't change anything. Then, for the rate parameters and deadlines, if we scale both of these by the same amount then we don't change anything about the problem so we if we assume these numbers are all integers, our algorithm is pseudopolynomial in both the job lengths and machine rates.\\

Again, we let $DP[i][(x_1, x_2, \cdots, x_{m-1})]$ indicate whether there is a feasible arrangement of jobs such that machine $i$ runs in exactly $x_i$ time but here compute the value of $DP[i][(x_1, x_2, \cdots, x_{m-1})]$ as follows:\\

if $DP[i-1][(x_1, x_2, \cdots, x_j-t_i\lambda_j, \cdots, x_{m-1}] = 1$ such that $x_j \le d_i$ for any $1 \le j \le m-1$, we can add the $i^{th}$ job to the end of the schedule of the $j^{th}$ machine and get a feasible scheduling just like above. Otherwise, if we add it to the $m^{th}$ machine, we know that the machine runs in time
$$\lambda_m\left(\sum_{k=1}^i t_k - \sum_{k=1}^{m-1}\frac{x_k}{\lambda_k}\right)$$
as from how we defined rates, if a machine has rate $\lambda_k$ and runs for $x_k$ time, then the total amount of work it does is $\frac{x_k}{\lambda_k}$. The total amount of work to do given the first $i$ jobs is $\sum_{k=1}^i t_k$, so we can subtract these values to get the amount of work that the last machine does and multiply by its rate to find the time it takes to complete all these jobs. Then, we just need to check if this value is less than or equal to $d_i$ to check if this is a valid scheduling since we know that the scheduling of the first $i-1$ jobs is feasible by induction. As there are the same number of states as in the $Pm \mid L_{\max}$ case and each state still takes $O(m)$ time to compute, the runtime is still $O(mn^mT^{m-1})$ and thus pseudopolynomial.

\subsection{Hardness of $P \mid L_{\max}$ and constant-factor approximations}
\begin{theorem}
Bin Packing reduces to $P \mid L_{\max}$
\end{theorem}
\begin{proof}
Given $n$ items, we can scale all of the sizes of the items and the sizes of the bins up until all of the numbers are integers (since arbitrary precision real numbers cannot even be stored, this is always possible). Now, after scaling, suppose the sizes of the bins is $b$ and suppose we have a solution to $P \mid L_{\max}$ in time $T(n)$. This solution must be able to check feasibility of a set of jobs since that's equivalent to checking whether $L_{\max} = 0$.\\

We can now note that if we have $m$ machines and $n$ jobs with lengths equal to the sizes of the items we wished to place in the bins after scaling, all of which have deadline $b$, checking for feasibility of these $n$ jobs is equivalent to checking whether it is possible to pack all of these items into $m$ bins.\\

Now, we can just find the correct number of bins as follows:\\
We can start with a lower bound of $1$ bin and an upper bound of $n$ bins since each item can only be placed in $1$ bin. Then, we can binary search for the correct number of bins. For each value $m$ that we check, we check feasibility of job scheduling on $m$ machines with the jobs described above. If it is not feasible, then this $m$ is a strict lower bound on the number of bins necessary and if it is feasible, then this $m$ is a non-strict upper bound on the number of bins necessary.\\

Each check for feasibility takes $T(n)$ time and we make $\log n$ checks since we can binary search, so this runtime is $T(n) \cdot O(\log n) = Poly(n) \cdot T(n)$. Thus, Bin packing reduces to $P \mid L_{\max}$.
\end{proof}

Since Bin Packing is strongly NP-hard, this means that $P \mid L_{\max}$ is also strongly NP-hard so if we do not assume there are a constant number of machines, we cannot even get a pseudopolynomial algorithm.\\

\begin{theorem}
There is no algorithm which gives us a constant factor approximation to $P2 \mid L_{\max}$ unless $P = NP$. Thus there is also no PTAS to solve $P2 \mid L_{\max}$ unless $P = NP$.
\end{theorem}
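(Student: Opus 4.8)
The plan is to exploit the fact that the optimal value of $L_{\max}$ can be exactly $0$, which forces any multiplicative approximation to be \emph{exact} on such instances. First I would suppose for contradiction that for some constant $c \ge 1$ there is a polynomial-time algorithm $A$ that, on every instance, returns a schedule whose maximum lateness is at most $c$ times the optimum. I will feed $A$ a family of instances on which merely deciding whether $\mathrm{OPT} = 0$ is already NP-hard, and then argue that the value of $A$'s output alone reveals the answer.

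Concretely, I would reduce from Partition, the NP-hard special case underlying the NP-hardness of $P2 \mid\mid C_{\max}$ cited in the introduction. Given positive integers $a_1, \dots, a_n$ summing to $2S$, I build an instance with two machines, one job per integer with processing time $t_i = a_i$, and common deadline $d_i = S$ for all $i$. Since the total work is $2S$, in any schedule some machine finishes no earlier than time $S$, so the maximum completion time is at least $S$ and hence $L_{\max} \ge 0$ in every schedule. Moreover $L_{\max} = 0$ is achievable exactly when the jobs split into two groups each of total size $S$, i.e.\ exactly when the Partition instance is a \emph{yes} instance; on a \emph{no} instance, integrality of the $a_i$ forces the makespan to be at least $S+1$, so $\mathrm{OPT} \ge 1$.

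Next I would run $A$ on this instance. On a \emph{yes} instance $\mathrm{OPT} = 0$, so $A$ must return a schedule of value at most $c \cdot 0 = 0$; combined with $L_{\max} \ge 0$ always, its output value is exactly $0$. On a \emph{no} instance $A$'s output is at least $\mathrm{OPT} \ge 1$. Thus reading off whether $A$ returns $0$ or a positive value decides Partition in polynomial time, giving $P = NP$. Finally, since a PTAS in particular supplies, for any fixed $\varepsilon$, a polynomial-time $(1+\varepsilon)$-approximation, which is a constant-factor approximation, the same argument rules out a PTAS unless $P = NP$.

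The step I expect to be the crux is not a calculation but the modeling choice that pins $\mathrm{OPT}$ to exactly $0$ on the yes instances: this is what collapses the multiplicative guarantee $c \cdot \mathrm{OPT}$ down to $0$ and turns the approximation algorithm into an exact feasibility tester. The integrality gap of $1$ between the best yes value ($0$) and the best possible no value ($\ge 1$) is what keeps the decision clean; without it (e.g.\ with real-valued data) one would have to rescale to restore a gap, which is harmless here since we may assume integer inputs.
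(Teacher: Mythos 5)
Your proof is correct, and it takes a genuinely different route from the paper's. Both arguments pivot on the same key observation --- a multiplicative guarantee collapses to exactness when $\mathrm{OPT} = 0$, since $c \cdot 0 = 0$ --- but where you go from there differs. The paper argues indirectly: a constant-factor approximation would decide feasibility ($L_{\max} \le 0$), a feasibility tester yields an exact solver via the binary-search reduction of its Theorem 1, and an exact polynomial-time solver contradicts the (cited, black-box) NP-hardness of $P2 \mid\mid L_{\max}$. You instead give a direct gap reduction from Partition: explicit instances on which $\mathrm{OPT} = 0$ precisely on yes-instances and $\mathrm{OPT} \ge 1$ on no-instances (by integrality), so reading off the approximation's output value alone decides Partition. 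Your route is more self-contained --- it needs neither Theorem 1 nor the literature's NP-hardness of the scheduling problem, only NP-hardness of Partition --- and it is also cleaner on a point the paper glosses over: lateness can be negative, and a multiplicative guarantee is ill-defined (indeed unsatisfiable) on instances with $\mathrm{OPT} < 0$, which the paper's feasibility-oracle argument implicitly relies on handling, since the Theorem 1 binary search queries instances whose optimum may be negative and where the approximation's output carries no guarantee. Your construction sidesteps this entirely because every schedule in your instances has $L_{\max} \ge 0$, so the guarantee is invoked only where it is meaningful. What the paper's approach buys in exchange is modularity: it shows that any polynomial-time feasibility tester, not just one extracted from an approximation algorithm, yields exact optimization, reusing machinery it had already established.
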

\begin{proof}
Firstly, if there were a PTAS to solve $P2 \mid L_{\max}$, for any $\epsilon$ we plug in we have to get a poly-time algorithm that would give us a $(1+\epsilon)$-approximation which is a constant factor approximation, so the fact that there is no PTAS follows directly from the lack of a constant factor approximation.\\

To see that there is no constant factor approximation, note that if we had a $(1 + \epsilon)$ approximation, when the answer is $0$, $(1+\epsilon)0 = 0$ so we need to get an answer of $0$ from our approximation algorithm. This means that in order to get a constant factor approximation in polynomial time, we need to be able to solve the decision problem for whether a set of jobs can be scheduled feasibly in polynomial time. However, as we saw above, solving the problem exactly has a polynomial time reduction to this. Thus, if this were possible, it would also be possible to solve $P2 \mid L_{\max}$ exactly in polynomial time. However, $P2 \mid L_{\max}$ is NP-hard so if we could do this, then we would have $P = NP$.\\

Thus, unless $P = NP$ there is no constant factor approximation or PTAS for\\
$P2 \mid L_{\max}$.
\end{proof}

\section{Conclusion}
Previously, while it was known that all of the above problems were $NP$-hard, and that a slightly harder version of the easiest problem above was strongly $NP$-hard, we did not know whether $P2 \mid \mid L_{\max}$ was is $P_{pseudo}$ or if it was strongly $NP$-hard. In this paper, we have resolved this by showing that it is in $P_{pseudo}$. We further proved that we could generalize this to show that $Qm \mid \mid L_{\max}$ was also in $P_{pseudo}$.\\

We also showed some hardness results: we showed that $P \mid \mid L_{\max}$, which is a different way to make $P2 \mid \mid L_{\max}$ harder than what was looked at before, was also strongly $NP$-hard, and also showed that there cannot exist any constant factor approximation schemes for these problems unless $P = NP$.\\

As such, a pseudopolynomial algorithm to solve $Qm \mid \mid L_{\max}$ is in a sense optimal for deterministic algorithms for this problem unless $P = NP$ as we cannot get any polynomial time constant factor approximation. Furthermore, many natural extensions of this problem end up being strongly $NP$-hard, so this is in a sense "close" to the boundary between weakly $NP$-hard problems and strongly $NP$-hard problems.\\

\newpage

\bibliographystyle{unsrt}
\bibliography{main}

\end{document}